\documentclass[copyright,creativecommons]{eptcs}

\usepackage{iftex}
\usepackage{amsmath, amssymb, amsthm}
\usepackage{enumitem}
\usepackage{tikz}
\usetikzlibrary{cd} 
\usepackage{graphicx}

\usepackage{tikzit}

\tikzstyle{gate}=[shape=rectangle, text height=1.5ex, text depth=0.25ex, yshift=0.5mm, fill=white, draw=black, minimum height=5mm, yshift=-0.5mm, minimum width=5mm, font={\small}, tikzit category=circuit]
\tikzstyle{big gate}=[shape=rectangle, text height=1.5ex, text depth=0.25ex, yshift=0.5mm, fill=white, draw=black, minimum height=10mm, yshift=-0.5mm, minimum width=5mm, font={\small}, tikzit category=circuit]
\tikzstyle{Z dot}=[inner sep=0mm, minimum size=2mm, shape=circle, draw=black, fill={rgb,255: red,221; green,255; blue,221}, tikzit category=zx]
\tikzstyle{Z phase dot}=[minimum size=5mm, font={\footnotesize\boldmath}, shape=rectangle, rounded corners=2mm, inner sep=0.2mm, outer sep=-2mm, scale=0.8, tikzit shape=circle, draw=black, fill={rgb,255: red,221; green,255; blue,221}, tikzit draw=blue, tikzit category=zx]
\tikzstyle{X dot}=[Z dot, shape=circle, draw=black, fill={rgb,255: red,255; green,136; blue,136}, tikzit category=zx]
\tikzstyle{X phase dot}=[Z phase dot, tikzit shape=circle, tikzit draw=blue, fill={rgb,255: red,255; green,136; blue,136}, font={\footnotesize\boldmath}, tikzit category=zx]
\tikzstyle{hadamard}=[fill=yellow, draw=black, shape=rectangle, inner sep=0.6mm, minimum height=1.5mm, minimum width=1.5mm, tikzit category=zx]
\tikzstyle{paulibox}=[fill={rgb,255: red,221; green,221; blue,255}, draw=black, shape=rectangle, inner sep=0.6mm, minimum height=5mm, minimum width=5mm, font={\footnotesize}, text height=1.5ex, text depth=0.25ex, tikzit category=zx]
\tikzstyle{vertex}=[inner sep=0mm, minimum size=1mm, shape=circle, draw=black, fill=black, tikzit category=misc]
\tikzstyle{vertex set}=[inner sep=0mm, minimum size=1mm, shape=circle, draw=black, fill=white, font={\footnotesize\boldmath}, tikzit category=misc]
\tikzstyle{small black dot}=[fill=black, draw=black, shape=circle, inner sep=0pt, minimum width=1.2mm, tikzit category=circuit]
\tikzstyle{cnot ctrl}=[fill=black, draw=black, shape=circle, inner sep=0pt, minimum width=1.2mm, tikzit category=circuit]
\tikzstyle{cnot targ}=[fill=white, draw=white, shape=circle, tikzit category=circuit, label={center:$\oplus$}, inner sep=0pt, minimum width=2.1mm, tikzit fill={rgb,255: red,102; green,204; blue,255}, tikzit draw=black]
\tikzstyle{ket}=[fill=white, draw=black, shape=regular polygon, regular polygon sides=3, regular polygon rotate=-30, scale=0.7, inner sep=1pt, tikzit category=circuit, tikzit shape=rectangle, tikzit fill=green]
\tikzstyle{bra}=[fill=white, draw=black, shape=regular polygon, regular polygon sides=3, regular polygon rotate=30, scale=0.7, inner sep=1pt, tikzit category=circuit, tikzit shape=rectangle, tikzit fill=red]
\tikzstyle{scalar}=[shape=rectangle, text height=1.5ex, text depth=0.25ex, yshift=0.5mm, fill=white, draw=black, minimum height=5mm, yshift=-0.5mm, minimum width=5mm, font={\small}]
\tikzstyle{clabel}=[fill=white, draw=none, shape=rectangle, tikzit fill={rgb,255: red,56; green,255; blue,242}, font={\footnotesize}, inner sep=1pt, tikzit category=labels]
\tikzstyle{empty diagram}=[draw={gray!40!white}, dashed, shape=rectangle, minimum width=1cm, minimum height=1cm, tikzit category=misc]
\tikzstyle{amap}=[fill=white, draw=black, shape=NEbox, tikzit category=asymmetric, tikzit fill=yellow, tikzit shape=rectangle]
\tikzstyle{amap conj}=[fill=white, draw=black, shape=NWbox, tikzit category=asymmetric, tikzit fill=green, tikzit shape=rectangle]
\tikzstyle{amap adj}=[fill=white, draw=black, shape=SEbox, tikzit category=asymmetric, tikzit fill=red, tikzit shape=rectangle]
\tikzstyle{amap trans}=[fill=white, draw=black, shape=SWbox, tikzit category=asymmetric, tikzit fill=orange, tikzit shape=rectangle]
\tikzstyle{astate}=[fill=white, draw=black, shape=NEtriangle, tikzit category=asymmetric, tikzit shape=circle, tikzit fill=yellow]
\tikzstyle{astate conj}=[fill=white, draw=black, shape=NWtriangle, tikzit category=asymmetric, tikzit shape=circle, tikzit fill=green]
\tikzstyle{astate adj}=[fill=white, draw=black, shape=SEtriangle, tikzit category=asymmetric, tikzit shape=circle, tikzit fill=red]
\tikzstyle{astate trans}=[fill=white, draw=black, shape=SWtriangle, tikzit category=asymmetric, tikzit shape=circle, tikzit fill=orange]
\tikzstyle{Doubled X Phase dot}=[fill={rgb,255: red,255; green,136; blue,136}, draw=black, shape=rectangle, tikzit category=zx, fill={rgb,255: red,255; green,136; blue,136}, font={\footnotesize\boldmath}, tikzit draw=blue, tikzit shape=circle, line width=1.6pt, rounded corners=2mm, inner sep=1mm, outer sep=-2mm, minimum size=4mm]
\tikzstyle{doubled z phase dot}=[draw=black, shape=rectangle, tikzit category=zx, fill={rgb,255: red,221; green,255; blue,221}, font={\footnotesize\boldmath}, tikzit draw=blue, tikzit shape=circle, line width=1.6pt, rounded corners=2mm, inner sep=1mm, outer sep=-2mm, minimum size=4mm]
\tikzstyle{doubled hadamard}=[fill=yellow, draw=black, shape=rectangle, inner sep=0.6mm, minimum height=1.5mm, minimum width=1.5mm, line width=1.6pt, minimum size=3mm, tikzit category=zx]
\tikzstyle{medium box}=[fill=white, draw=black, shape=rectangle, minimum height=1cm, minimum width=0.75cm]

\tikzstyle{hadamard edge}=[-, dashed, dash pattern=on 2pt off 0.5pt, thick, draw={rgb,255: red,68; green,136; blue,255}]
\tikzstyle{box edge}=[-, dashed, dash pattern=on 2pt off 0.5pt, thick, draw={rgb,255: red,203; green,192; blue,225}]
\tikzstyle{brace edge}=[-, tikzit draw=blue, decorate, decoration={brace,amplitude=1mm,raise=-1mm}]
\tikzstyle{diredge}=[->]
\tikzstyle{double edge}=[-, double, shorten <=-1mm, shorten >=-1mm, double distance=2pt]
\tikzstyle{gray edge}=[-, {gray!60!white}]
\tikzstyle{pointer edge}=[->, very thick, gray]
\tikzstyle{boldedge}=[-, line width=1.6pt, shorten <=-0.17mm, shorten >=-0.17mm]
\tikzstyle{bidir edge}=[<->, very thick, draw={rgb,255: red,191; green,191; blue,191}]
\tikzstyle{functorbox}=[-, fill=none, draw={rgb,255: red,150; green,150; blue,189}, line width=1.2pt, dashed, dash pattern=on 5pt off 2pt]

\ifpdf
  \usepackage{underscore}         
  \usepackage[T1]{fontenc}        
\else
  \usepackage{breakurl}           
\fi
\usepackage[only,llbracket,rrbracket]{stmaryrd}

\newcommand{\Smooth}{\mathsf{Smooth}}
\newcommand{\FHilb}{\mathsf{FHilb_{\mathbb{C}}}}
\newcommand{\Qubit}{\mathsf{Qubit}}

\newcommand{\CPM}{\mathsf{CPM(\FHilb)}}
\newcommand{\CPMQ}{\mathsf{CPM(\Qubit)}}
\newcommand{\doublezx}{\mathsf{ZX_{\scalebox{0.25}{\tikzfig{discardup}}}}}
\newcommand{\MatR}{\mathsf{Mat}_\mathbb{R}}

\newcommand{\boxfunctor}{\mathsf{F}}
\newcommand{\embedfunctor}{\mathsf{G}}
\newcommand{\interpF}[1][-]{\mathsf{\llbracket #1 \rrbracket}}
\newcommand{\interpFbig}[1][-]{\mathsf{\Bigl\llbracket \ #1 \ \Bigr\rrbracket}}
\newcommand{\interpFbigg}[1][-]{\mathsf{\Biggl\llbracket \ #1 \ \Biggr\rrbracket}}
\newcommand{\smoothfunctor}{\mathsf{H}}

\newcommand{\bra}[1]{\lvert #1 \rangle}
\newcommand{\ket}[1]{\langle #1 \rvert}
\newcommand{\expectation}[1]{\langle #1 \rangle}

\def\fcmp{\mathbin{\raise 0.6ex\hbox{\oalign{\hfil$\scriptscriptstyle      \mathrm{o}$\hfil\cr\hfil$\scriptscriptstyle\mathrm{9}$\hfil}}}}

\newtheorem{theorem}{Theorem}
\newtheorem{corollary}{Corollary}

\theoremstyle{remark}
\newtheorem{remark}{Remark}
\theoremstyle{definition}
\newtheorem{definition}{Definition}

\title{Hybrid Quantum-Classical Machine Learning with String Diagrams}
\author{Alexander Koziell-Pipe
\institute{University of Oxford\\
Oxford, UK}
\email{alexander.koziell-pipe@cs.ox.ac.uk}
\and
Aleks Kissinger
\institute{University of Oxford\\
Oxford, UK}
\email{aleks.kissinger@cs.ox.ac.uk}
}

\begin{document}
\maketitle

\begin{abstract}
Central to near-term quantum machine learning is the use of hybrid quantum-classical algorithms. This paper develops a formal framework for describing these algorithms in terms of string diagrams: a key step towards integrating these hybrid algorithms into existing work using string diagrams for machine learning and differentiable programming. A notable feature of our string diagrams is the use of functor boxes, which correspond to a quantum-classical interfaces. The functor used is a lax monoidal functor embedding the quantum systems into classical, and the lax monoidality imposes restrictions on the string diagrams when extracting classical data from quantum systems via measurement. In this way, our framework provides initial steps toward a denotational semantics for hybrid quantum machine learning algorithms that captures important features of quantum-classical interactions.
\end{abstract}

\section{Introduction}

At time of writing, we are still in the Noisy Intermediate Scale (NISQ) era of quantum computing\cite{Preskill2018}, where techniques to correct errors are still under development and the size of quantum circuits are limited. Quantum algorithms requiring deep circuits and larger numbers of qubits, notably Shor's\cite{Shor1994, Shor1997} and Grover's\cite{Grover1996} algorithms, are still beyond our reach for practical use.

Hybrid quantum-classical algorithms, where quantum computations interoperate with classical, 
offer a pragmatic solution to these limitations. These hybrid algorithms extract as much performance as possible out of limited quantum resources by delegating a large part of their computation to classical computers, then using quantum processors for small subroutines for which they are particularly well-suited.

The archetypal example of hybrid algorithms in the context of quantum machine learning is the Variational Quantum Eigensolver\cite{Peruzzo2014} and more generally, Variational Quantum Algorithms\cite{Cerezo2021}. In these algorithms a parameterized quantum circuit is used to compute a cost function, which is provided to a classical optimizer that computes updated parameters to minimize the cost. Computing updated parameters may involve further quantum processing in order to evaluate gradients of the cost function with respect to the parameters\cite{Mitarai2018, Schuld2019}. This process is repeated iteratively until a desired performance or termination condition is met. These algorithms are inherently hybrid in the sense of \cite{Callison2022}, as they require non-trivial amounts of both quantum and classical computational resources to run. Furthermore, they involve the repeated transfer of information between quantum and classical systems.

In this work, we provide a diagrammatic formalism of hybrid algorithms using category theory. Originally developed to study algebraic topology\cite{eilenberg1945general}, category theory has since found applications across a range of scientific disciplines, including quantum computing\cite{abramsky2004categorical, abramsky2008categorical, coecke2018picturing} and machine learning\cite{fong2019backprop, cruttwell2024deep, gavranović2024position}\footnote{For references spanning a range of machine learning topics, see \cite{githubCTML}.}.

Category theory provides a unifying language for different branches of mathematics. Furthermore, it offers an abstract framework for understanding and formalizing different mathematical structures and the relationships between them. In the context of this paper, category theory bridges the gap between machine learning and quantum computing. The mathematics of machine learning is characterised by smooth, differentiable functions and the ability to freely copy and delete information without restriction. This sits in stark contrast with quantum computation, where calculating gradients of arbitrary quantum circuits is highly non-trivial, and copying and deleting information is physically prohibited by the no-cloning and no-deleting theorems of quantum information.

This separation between the mathematics of machine learning and quantum computing is reflected by the distinct properties of the categories in which each are described. While categorical quantum mechanics is formulated in terms of dagger compact closed categories, the categorical machine learning typically uses Cartesian categories in which there is a suitable notion of a reverse derivative. Our work is a step towards combining these two theories of systems and processes into a single framework, with the intention of facilitating seamless interoperability in practical implementations of hybrid quantum-classical algorithms.

\begin{figure}[t]
    \centering
    \includegraphics[width=12cm]{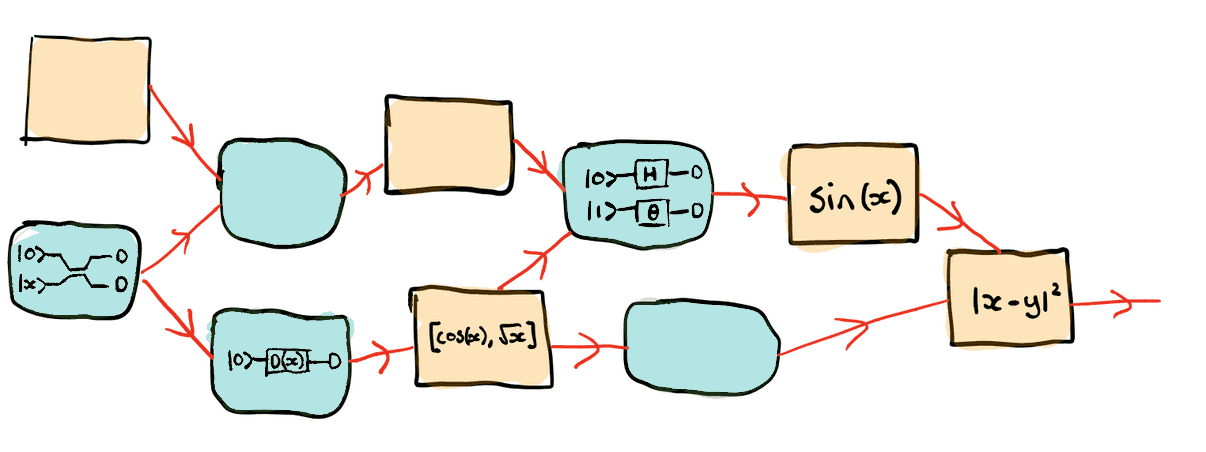}
    \caption{From \cite{pennylaneHybridComputation}. An informal schematic of the flow of information between classical and quantum nodes. Our work aims to formalize this in terms of category theory, allowing it to be depicted as a string diagram.}
\end{figure}

Our contributions are as follows. First, following\cite{cruttwell2024deep}, we establish the Cartesian category $\Smooth$ as the category with which to describe the classical part of our algorithms. Next, we choose the compact closed category $\CPM$ of completely positive maps\cite{selinger2007dagger} for the quantum part. We then show that there exists a lax monoidal functor $\boxfunctor: \CPM \longrightarrow \Smooth$, which we define as the composition of two functors $\embedfunctor: \CPM \longrightarrow
\MatR$ and $\smoothfunctor: \MatR \longrightarrow \Smooth$. We give a concrete description of this functor by fixing a basis in terms of Pauli operators for each object of $\CPM$. The lax monoidal functor bridges our quantum and classical systems, and allows the description of hybrid classical-quantum algorithms using string diagrams\cite{selinger2010survey, piedeleu2023introduction} in which functorial boxes\cite{cockett1999linearly, mellies2006functorial} represent classical-quantum interfaces. The lax condition on our functor allows multiple wires to enter the box, representing the encoding of classical data into the quantum circuit, but only one wire to exit, representing a measurement probability. Furthermore, we are able to represent the quantum system as a ZX-diagram\cite{coecke2008interacting, vandewetering2020zxcalculus, websitezxcalculus}, making the quantum part of our algorithm amenable to quantum circuit optimization methods. Finally, we show a concrete application of our framework to the quantum machine learning algorithm of \cite{farhi2018classification}, in which parameterized quantum circuits are trained on MNIST classification\cite{MNIST}. Our framework gives us a denotational semantics of this quantum machine learning algorithm that could form the basis of a programming language for quantum machine learning, or hybrid quantum-classical algorithms more generally.

Our work is intended as a step toward the extension of categorical gradient based learning framework to allow `quantum nodes', and future work will aim to formalise parameterization and backpropagation in these hybrid algorithms through the use of the parametric lens construction of \cite{cruttwell2024deep} and diagrams for `gradient recipes'\cite{Mitarai2018, Schuld2019}, an established method for computing the gradient of a parameterized quantum circuit for use in quantum machine learning.
\section{Theory}

In this section, we define categories $\CPMQ$ and $\Smooth$ for quantum and classical systems. We then show that there is a lax monoidal functor $\boxfunctor: \CPMQ \rightarrow \Smooth$. We define as the composition of two functors  $\embedfunctor: \CPMQ \rightarrow \MatR$ and $\smoothfunctor: \MatR \rightarrow \Smooth$, using the category $\MatR$ of real matrices as an intermediate step in the passage from quantum to classical systems.

For a brief review of categories and functors, see section \ref{appendixCT} of the appendix. For a further introduction to category theory see, for example, \cite{leinster2016basic, fong2018seven}.

\subsection{$\FHilb$ and ZX-diagrams}

Pure state quantum mechanics can be described mathematically in terms of complex Hilbert spaces. Furthermore, in quantum computing we are usually interested in finite-dimensional systems. As a consquence, much of the work applying category theory to quantum mechanics involves the use of the category $\FHilb$:

\begin{definition}[$\FHilb$]
    The symmetric monoidal category $\FHilb$ is defined by:
    \begin{description}[labelindent=2.5em]
        \item[Objects] finite-dimensional complex Hilbert spaces.
        \item[Morphisms] $H \rightarrow K$ linear maps from $H$ to $K$.
        \item[Monoidal product] the tensor product of Hilbert spaces.
        \item[Monoidal unit] the 0-dimensional Hilbert space.
    \end{description}
\end{definition}

When working with quantum computation, in particular qubits, we can simplify the objects we have to work with by moving to the subcategory of $\FHilb$ consisting of products of 2-dimensional complex spaces $(\mathbb{C}^{2})^{\otimes n}$. We call this category $\Qubit$ and define it as a PROP:

\begin{definition}[$\Qubit$]
    The category $\Qubit$ is the PROP whose morphisms $n \rightarrow m$ are given by $2^m \times 2^n$ complex matrices, $\Qubit(n, m) := \mathbb{M}^{2^m \times 2^n}_{\mathbb{C}}$. The monoidal product on morphisms is given by the Kronecker product of matrices.
\end{definition}

We can think of the objects $n \in \Qubit$ as corresponding to Hilbert spaces $(\mathbb{C}^2)^{\otimes n}$ containing the complex amplitudes for an $n$-qubit system. When working in $\Qubit$, we can use ZX-diagrams\cite{coecke2008interacting, vandewetering2020zxcalculus, websitezxcalculus} to describe our morphisms. ZX-diagrams are generated by green (light) Z spiders and red (dark) X spiders, corresponding to the quantum states $\bra{00...0}\ket{00..0} + e^{i\alpha}\bra{11...1}\ket{11...1}$ and $\bra{++...+}\ket{++...+} + e^{i\alpha}\bra{--...-}\ket{--...-}$, respectively:

\begin{equation*}
    \tikzfig{greenspider} \qquad \qquad \qquad \tikzfig{redspider}
\end{equation*}

Any morphism in $\Qubit$ can be expressed as (possibly multiple) ZX-diagram(s). Furthermore, the diagrams can be equipped with a set of rewrite rules allowing one diagram to be transformed into another. There are known sets of rewrite rules that are sound and complete, meaning that two diagrams can be transformed into one another if and only if they correspond to the same quantum process. ZX-diagrams together with a canonical set of rewrite rules are referred to as the ZX-calculus, and has useful applications to quantum circuit simplification and quantum error correction. For a more in-depth introduction, see \cite{vandewetering2020zxcalculus}.

\subsection{$\CPM$ and Doubled ZX}

When bridging from quantum to classical systems, we will choose to work with the density operator formalism of quantum mechanics. This allows us to work with mixed-state quantum mechanics, as well as express our diagrams using a basis of Pauli operators with real coefficients. This requires deriving a different category from $\FHilb$, using the CPM-construction of \cite{selinger2007dagger}. When applied to $\FHilb$, this gives us the following category, where we denote the space of linear operators on a Hilbert space $H$ by $\mathcal{L}(H)$:

\begin{definition}
    The category $\CPM$ is defined by
    \begin{description}[labelindent=2.5em]
        \item[Objects] complex Hilbert spaces.
        \item[Morphisms] $H \rightarrow K$ completely positive linear superoperators $\mathcal{L}(H) \longrightarrow \mathcal{L}(K)$.
    \end{description}
\end{definition}

Like $\FHilb$, $\CPM$ is symmetric monoidal, with monoidal product given by the tensor product. If we once again restrict to 2-dimensional complex Hilbert spaces, we can similarly define $\CPMQ$:

\begin{definition}
    The category $\CPMQ$ is the PROP whose morphisms $n \rightarrow m$ are the completely positive linear superoperators $\mathcal{L}((\mathbb{C}^2)^{\otimes n}) \longrightarrow \mathcal{L}((\mathbb{C}^2)^{\otimes m})$.
\end{definition}

Contrasting with $\Qubit$, objects $n$ of $\CPMQ$ can be thought of as corresponding to spaces $(\mathbb{C}^2)^{\otimes n} \otimes (\mathbb{C}^2)^{\otimes n} := \ \mathbb{C}^{2^n \times 2^n}$. States (morphisms $0 \rightarrow n$ for some $n \in \Qubit$) correspond to density matrices, while processes $n \rightarrow m$ correspond to quantum channels.

We can extend the ZX-calculus to $\CPMQ$ through a process known as `doubling'\cite[Chapter 8]{coecke2018picturing}. In this doubled notation, thick (`doubled') wires correspond to quantum systems and thin single wires correspond to classical ones. Z and X spiders in this doubled notation are defined in terms of their undoubled counterparts as follows:

\begin{equation}
    \tikzfig{doubledgreen} \quad := \quad \tikzfig{doubledgreendef} \qquad\qquad
    \tikzfig{doubledred} \quad := \quad \tikzfig{doubledreddef}
\end{equation}

In addition, we gain a `decoherence' operation, defined by:

\begin{equation}
    \tikzfig{discard} \quad := \quad \tikzfig{discarddef}
\end{equation}

In terms of density operators and superoperators, we can regard this decoherence operator as taking the trace, or partial trace if applied to a single output of a multiple-output diagram. In the context of quantum circuits, this operation can be thought of as discarding the measurement result on a qubit, and is often referred to as the `discard' operation.

These doubled ZX diagrams live in their own PROP, which we denote $\doublezx$. There is then an interpretation functor $\interpF: \doublezx \rightarrow \CPMQ$ that maps a doubled ZX diagram to the completely positive map it corresponds to.

\subsection{$\Smooth$}

We describe the classical part of our machine learning algorithms using the category $\Smooth$ of smooth maps between Euclidean spaces. We choose to define $\Smooth$ as a PROP, with objects $n$ corresponding to Euclidean spaces $\mathbb{R}^n$:

\begin{definition}
    The category $\Smooth$ is the PROP with morphisms $n \rightarrow m$ tuples of smooth maps $f := (f_1, \ldots, f_m): \mathbb{R}^n \longrightarrow \mathbb{R}^m$. The monoidal product of two morphisms $f: n \rightarrow m$ and $g: n' \rightarrow m'$ is obtained by concatenating the output of $f$ (an $m$-tuple) on the first $n$ elements of $\mathbb{R}^{n + n'}$ with the output of $g$ (an $m'$-tuple) on the last $n'$ elements.
\end{definition}

\subsection{Bridging from Quantum to Classical}

In order to combine quantum and classical systems in the same string diagrams, we must define a functor $\boxfunctor: \CPMQ \rightarrow \Smooth$. To this end, we first introduce the PROP $\MatR$:

\begin{definition}
    The category $\MatR$ is the PROP whose morphisms $n \rightarrow m$ are the real $m \times n$ matrices $\mathbb{M}^{m \times n}_{\mathbb{R}}$. The monoidal product on morphisms is given by the Kronecker product of matrices.
\end{definition}

Next, we define a functor $\embedfunctor: \CPMQ \rightarrow \MatR$, which allows us to express quantum computations using real matrices. First, note that a completely positive superoperator $\mathcal{L}(\mathbb{C}^{2^n}) \longrightarrow \mathcal{L}(\mathbb{C}^{2^m})$ is uniquely determined by a real matrix with respect to the $n$ and $m$-tensor products of Pauli matrices.

 \begin{definition}
     The functor $\embedfunctor: \CPMQ \longrightarrow \MatR$ is defined by the following mappings:
     \begin{description}[labelindent=2.5em]
        \item[Objects] $n \longmapsto 4^n$. 
        \item[Morphisms] A completely positive map $\mathcal{L}(\mathbb{C}^{2^n}) \longrightarrow \mathcal{L}(\mathbb{C}^{2^m})$ is mapped to the $m \times n$ real matrix with respect to the $m$- and $n$-tensor Pauli bases.
     \end{description}
 \end{definition}

As an example, consider the quantum state $\bra{0} + e^{i\alpha}\bra{1}$. Ignoring global scalar factors and working in the computational basis, the density operator for this state is:
\begin{equation*}
    \begin{pmatrix}
        1 & e^{-i\alpha} \\
        e^{i\alpha} & 1
    \end{pmatrix} \quad = \quad I + \cos (\alpha) \sigma_x + \sin (\alpha) \sigma_y
\end{equation*}

Where $I$ is the identity matrix and $\sigma_i$ are the Pauli matrices. This would correspond to the following doubled ZX diagram:

\begin{equation*}
    \tikzfig{boldphase}
    \vspace{0.5em}
\end{equation*}

Interpreting this diagram in $\CPMQ$ and applying $\embedfunctor$, this would become:

\begin{equation}
    \embedfunctor\interpFbig[\ \tikzfig{boldphase} \ ] \quad = \quad \begin{pmatrix}
        1 \\ \cos(\alpha) \\ \sin (\alpha) \\ 0
    \end{pmatrix}
\end{equation}

More generally, we can derive real matrices for completely positive maps as follows.  For each $n \in \mathbb{N}$, fix an ordering on the $n$-qubit Pauli operators \footnote{For example, $\left( I \otimes I,\ I \otimes \sigma_{X},\ I \otimes \sigma_{Y},\ I \otimes \sigma_{Z},\ \sigma_{X} \otimes I,\ ...\ ,\ \sigma_{Z} \otimes \sigma_{Z}\right)$ on the 2-qubit operators.} and label them according to this order as $\sigma^{n}_i$ for $i \in \{ 0,\ 1,\ ...\ ,\ 4^n-1 \}$. Then entries of the matrix $M$ of a completely positive superoperator $\Phi: n \rightarrow m$ are obtained via:

\begin{equation}\label{eq:cptomat}
    M_{ij} = \mathrm{tr}\left(\left(\sigma^{m}_i\right)^\dagger \cdot\ \Phi(\sigma^{n}_j)\right)
\end{equation}

Given a Kraus decomposition of the operator $\Phi$, this can be calculated numerically. For example,  we can use (\ref{eq:cptomat}) to derive real matrices for the state $\bra{+} + e^{i\alpha} \bra{-}$:

\begin{equation}
    \embedfunctor\interpFbig[\  \tikzfig{boldphasered} \ ] \quad = \quad
    \begin{pmatrix}
        1 \\ 0 \\ -\sin (\alpha) \\ \cos(\alpha)
    \end{pmatrix}
\end{equation}

as well as $R_Z$, $R_X$,

\begin{equation}
    \embedfunctor\interpFbig[\  \tikzfig{doubledrz} \ ]\quad = \quad
    \begin{pmatrix}
        1 & 0 & 0 & 0 \\
        0 & \cos(\alpha) & -\sin(\alpha) & 0 \\
        0 & \sin(\alpha) & \cos(\alpha) & 0 \\
        0 & 0 & 0 & 1
    \end{pmatrix}
\end{equation}

\begin{equation}
    \embedfunctor\interpFbig[\  \tikzfig{doubledrx} \ ]\quad = \quad
    \begin{pmatrix}
        1 & 0 & 0 & 0 \\
        0 & 1 & 0 & 0 \\
        0 & 0 & \cos(\alpha) & -\sin(\alpha) \\
        0 & 0 & \sin(\alpha) & \cos(\alpha)
    \end{pmatrix}
\end{equation}

Hadamard,

\begin{equation}
    \embedfunctor\interpFbig[\  \tikzfig{doubledhadamard} \ ] \quad = \quad
    \begin{pmatrix}
        1 &  0 &  0 &  0 \\
        0 &  0 &  0 &  1 \\
        0 &  0 & -1 &  0 \\
        0 &  1 &  0 &  0
    \end{pmatrix}
\end{equation}
 and CNOT gates (where $I$ and $0$ are the $2\times 2$ identity and zero matrices, respectively):

\begin{equation}
    \embedfunctor\interpFbigg[\  \tikzfig{CNOT} \ ] \quad = \quad
    \begin{pmatrix}
        I & 0 & 0 & 0 & 0 & 0 & 0 & 0 \\
        0 & 0 & 0 & 0 & 0 & 0 & 0 & I \\
        0 & 0 & 0 & 0 & 0 & 0 & 0 & 0 \\
        0 & 0 & 0 & 0 & 0 & -i\sigma_{y} & 0 & 0 \\
        0 & 0 & 0 & 0 & \sigma_{x} & 0 & 0 & 0 \\
        0 & 0 & 0 & i\sigma_{y} & 0 & 0 & 0 & 0 \\
        0 & 0 & 0 & 0 & 0 & 0 & I & 0 \\
        0 & I & 0 & 0 & 0 & 0 & 0 & 0 \\
    \end{pmatrix}
\end{equation}

\begin{remark}
    This construction could in principle be extended to qudits by using generalizations of Pauli matrices, see \cite{kimura2003bloch} and \cite{bertlmann2008bloch}.
\end{remark}

Now that we have a functor to go from our abstract quantum circuit representations in $\CPMQ$ to concrete matrices in $\MatR$, it remains to go from $\MatR$ to $\Smooth$. This is done by the following functor:

 \begin{definition}
     The functor $\smoothfunctor: \MatR \longrightarrow \Smooth$ is defined by the following:
     \begin{description}[labelindent=2.5em]
        \item[Objects] $\smoothfunctor$ is identity on objects $n \mapsto n$.
        \item[Morphisms] $m \times n$ real matrices are mapped to the corresponding smooth map $\mathbb{R}^n \longrightarrow \mathbb{R}^m$.
     \end{description}
 \end{definition}

The composition of these two functors $\boxfunctor := \embedfunctor \fcmp \smoothfunctor$ gives us a functor from quantum to classical, which we will use to draw a functor box in our string diagrams.

 We now prove a theorem about our functor $\boxfunctor$, which will be of importance when constructing our quantum machine learning string diagrams.

We now prove that $\boxfunctor$ is lax monoidal.

\begin{theorem}
The functor $\boxfunctor$ is lax monoidal.
\end{theorem}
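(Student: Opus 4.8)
The plan is to build the lax structure on $\boxfunctor$ out of its factorization $\boxfunctor = \embedfunctor \fcmp \smoothfunctor$. First I would observe that $\embedfunctor : \CPMQ \to \MatR$ is in fact \emph{strong} monoidal: on objects $\embedfunctor(n \otimes n') = 4^{n+n'} = 4^n \cdot 4^{n'} = \embedfunctor(n) \otimes \embedfunctor(n')$, using that the monoidal product of $\MatR$ acts on objects by multiplication (so as to match the Kronecker product on morphisms); and on morphisms $\embedfunctor(\Phi \otimes \Psi) = \embedfunctor(\Phi) \otimes \embedfunctor(\Psi)$ as a Kronecker product of matrices, because the $(n{+}n')$-qubit Pauli operators are exactly the tensor products $\sigma^n_j \otimes \sigma^{n'}_{j'}$ and the trace is multiplicative over tensor factors, so each entry of the defining matrix (\ref{eq:cptomat}) of $\Phi \otimes \Psi$ factors into the product of an entry of $\embedfunctor(\Phi)$ and an entry of $\embedfunctor(\Psi)$. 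Granting this, it suffices to show that $\smoothfunctor : \MatR \to \Smooth$ is lax monoidal, since a strong monoidal functor composed with a lax monoidal functor is lax monoidal, and then the laxators of $\boxfunctor$ are just those of $\smoothfunctor$ transported along $\embedfunctor$.

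For $\smoothfunctor$, the point is that it identifies objects but sends the multiplicative monoidal product of $\MatR$ to the additive one of $\Smooth$: $\smoothfunctor(n \otimes m) = nm$ whereas $\smoothfunctor(n) \otimes \smoothfunctor(m) = n + m$. I would therefore take the laxator to be the outer (Kronecker) product of vectors $\mu_{n,m} : \mathbb{R}^{n+m} \to \mathbb{R}^{nm}$, $(x,y) \mapsto x \otimes y$, and the unit laxator $\epsilon : \mathbb{R}^0 \to \mathbb{R}^1$ to be the map picking out $1 \in \mathbb{R}$. Transported through $\embedfunctor$, this gives as laxator of $\boxfunctor$ the map $\mathbb{R}^{4^n + 4^{n'}} \to \mathbb{R}^{4^{n+n'}}$, $(a,b) \mapsto a \otimes b$ --- which is the evident thing on the level of Pauli coefficients, since the coefficient vector of $A \otimes B$ in the $(n{+}n')$-qubit Pauli basis is the Kronecker product of the coefficient vectors of $A$ and $B$ --- while $\epsilon$ picks out the coefficient $1$ of the identity operator on $\mathcal{L}(\mathbb{C}^{2^0}) = \mathbb{C}$.

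The remaining checks are routine. Each $\mu_{n,m}$ is smooth because its components are the monomials $x_i y_j$, hence a morphism of $\Smooth$. Naturality of $\mu$ in both arguments reduces to the mixed-product identity $(M \otimes N)(x \otimes y) = (Mx) \otimes (Ny)$ for matrices, which combined with $\embedfunctor(\Phi \otimes \Psi) = \embedfunctor(\Phi) \otimes \embedfunctor(\Psi)$ gives naturality of the laxator of $\boxfunctor$. Since $\CPMQ$, $\MatR$ and $\Smooth$ are all strict monoidal, the associativity coherence square collapses to the associativity of the Kronecker product $(x \otimes y) \otimes z = x \otimes (y \otimes z)$, and the two unit triangles collapse to $1 \otimes x = x = x \otimes 1$ for the $1 \times 1$ identity block. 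This establishes that $\boxfunctor$ is lax monoidal.

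I expect the only step needing genuine care, rather than a one-line check, to be the bookkeeping of the chosen Pauli orderings. Since $\embedfunctor$ fixes an \emph{independent} ordering $\sigma^k_0, \ldots, \sigma^k_{4^k - 1}$ for each $k$, the clean identity $\embedfunctor(\Phi \otimes \Psi) = \embedfunctor(\Phi) \otimes \embedfunctor(\Psi)$ holds on the nose only when these orderings are chosen compatibly across $k$ (e.g.\ the nested/lexicographic family indicated in the footnote, for which the index set $\{0, \ldots, 4^{n+n'}-1\}$ literally \emph{is} $\{0, \ldots, 4^n - 1\} \times \{0, \ldots, 4^{n'} - 1\}$ in lexicographic order); for an arbitrary family a fixed permutation matrix intervenes and must be folded into $\mu$, which changes nothing but should be stated. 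It is also worth recording why strongness genuinely fails here: $\mu_{n,n'}$ cannot be invertible since $4^n + 4^{n'} < 4^{n+n'}$ for $n, n' \geq 1$, and conceptually because $\mu$ is bilinear but not linear --- the same asymmetry that will later force classical data extracted by measurement to leave a functor box along a single wire.
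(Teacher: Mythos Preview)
Your proposal is correct and arrives at exactly the same coherence data as the paper: $\epsilon$ is the constant $1$, and $\mu$ is the Kronecker product of vectors $(\vec v, \vec w) \mapsto \vec v \otimes \vec w$. The verifications are also the same in substance---associativity from associativity of $\otimes$, unitality from $1 \otimes \vec v = \vec v$, and naturality from the mixed-product identity $(M_1 \otimes M_2)(\vec u \otimes \vec v) = M_1\vec u \otimes M_2\vec v$ together with the fact that $\embedfunctor$ preserves the tensor product.

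The organizational difference is that the paper defines the laxators for $\boxfunctor$ directly and checks the axioms in one pass, whereas you factor the argument: $\embedfunctor$ is strong (indeed strict) monoidal, $\smoothfunctor$ is lax monoidal, and lax composed with strong is lax. The paper is aware of the first fact (it appears in a footnote during the naturality check) but does not exploit it structurally. Your packaging has the advantage of isolating exactly where the laxness originates---in the passage $\MatR \to \Smooth$ where a multiplicative monoidal product is sent to an additive one---and makes the later remark about ``only one wire leaving the box'' more transparent. You also flag two points the paper leaves implicit: the dependence of the on-the-nose identity $\embedfunctor(\Phi \otimes \Psi) = \embedfunctor(\Phi) \otimes \embedfunctor(\Psi)$ on a compatible family of Pauli orderings, and the dimension count $4^n + 4^{n'} < 4^{n+n'}$ showing $\mu$ cannot be invertible. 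Neither approach is materially harder than the other; yours is slightly more modular and the paper's slightly more self-contained.
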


\begin{proof}
    We begin by defining the coherence maps:
    \begin{itemize}[leftmargin=25pt]
    \item The morphism $\epsilon: 0 \longrightarrow \boxfunctor(0) = 1$ is the map $\mathbb{R}^0 \rightarrow \mathbb{R}$ taking the single element $* \mapsto 1$.
    \item We define the natural transformation $\mu_{n, m}: \boxfunctor(n) + \boxfunctor(m) \longrightarrow \boxfunctor(n + m)$ by noting that in $\Smooth$, the object $\boxfunctor(n) + \boxfunctor(m) = 4^n + 4^m$ corresponds to the Euclidean space $\mathbb{R}^{4^n + 4^m}$. Elements of this space can be characterized by pairs of $4^n$ and $4^m$-dimensional real vectors $(\Vec{v},\ \Vec{w})$. We map these to the space $\mathbb{R}^{4^{n+m}}$ corresponding to $\boxfunctor(n + m) = 4^{n + m}$ as follows:
    \begin{equation}
        (\Vec{v},\ \Vec{w}) \longmapsto \Vec{v} \otimes \Vec{w}
    \end{equation}
    Where $\otimes$ denotes the Kronecker product.
    \end{itemize}

    We now prove that these maps satisfy the coherence conditions. For clarity, we denote the monoidal product on morphisms as $\otimes$ in $\CPMQ$ and as $\times$ in $\Smooth$, reflecting their operational behaviour.
    \begin{itemize}[leftmargin=25pt]
    \item For associativity, we must show that the following diagram commutes:
    \begin{equation}
    \begin{tikzcd}
        (\boxfunctor (n) + \boxfunctor (m)) + \boxfunctor (k) \arrow[r, "\alpha^{\Smooth}"] \arrow[d, "\mu_{n, m} \times id_{\boxfunctor(k)}", swap]
        & \boxfunctor (n) + (\boxfunctor (m) + \boxfunctor (k)) \arrow[d, "id_{\boxfunctor(n)} \times \mu_{m, k}"]
        \\ \boxfunctor (n + m) + \boxfunctor (k) \arrow[d, "\mu_{n + m, k}", swap]
        & \boxfunctor (n) + \boxfunctor (m + k) \arrow[d, "\mu_{n, m + k}"]
        \\ \boxfunctor ((n + m) + k) \arrow[r, "\boxfunctor (\alpha^{\CPMQ})", swap]
        & \boxfunctor (n + (m + k))
    \end{tikzcd}
    \end{equation}

    Where $\alpha$ denotes the associators in $\Smooth$ and $\CPMQ$. Given that the associators in $\Smooth$ and $\CPMQ$ are trivial, we draw the simpler diagram:
    \begin{equation}
    \begin{tikzcd}
        \boxfunctor (n) + \boxfunctor (m) + \boxfunctor (k) \arrow[r, "id_{\boxfunctor (n)} \times \mu_{m, k}"] \arrow[d, "\mu_{n, m} \times id_{\boxfunctor (k)}", swap]
        & \boxfunctor (n) + \boxfunctor (m + k) \arrow[d, "\mu_{n, m + k}"]
        \\ \boxfunctor (n + m) + \boxfunctor (k) \arrow[r, "\mu_{n + m, k}", swap]
        & \boxfunctor (n + m + k)
    \end{tikzcd}
    \end{equation}
    We prove this diagram commutes as follows. Assume any element $(\Vec{u},\ \Vec{v},\ \Vec{w})$ from the top left of the square. The top corner of the square maps this element as follows:
    \begin{equation}
        (\Vec{u},\ \Vec{v},\ \Vec{w}) \mapsto (\Vec{u},\ \Vec{v} \otimes \Vec{w}) \mapsto (\Vec{u} \otimes (\Vec{v} \otimes \Vec{w}))
    \end{equation}
    Conversely, the bottom corner of the square maps this element as follows:
    \begin{equation}
        (\Vec{u},\ \Vec{v},\ \Vec{w}) \mapsto (\Vec{u} \otimes \Vec{v},\ \Vec{w}) \mapsto ((\Vec{u} \otimes \Vec{v}) \otimes \Vec{w})
    \end{equation}
    These two mappings are equal by the associativity of the Kronecker product. Hence the square commutes.
    \item For unitality, we must show that the following diagrams commute:
    \begin{equation}
    \begin{tikzcd}
        0 + \boxfunctor (n) \arrow[r, "\epsilon \times id_{\boxfunctor(n)}"] \arrow[d, "l^{\Smooth}", swap] & \boxfunctor(0) + \boxfunctor (n) \arrow[d, "\mu_{0, n}"]
        \\ \boxfunctor (n) & \boxfunctor (0 + n) \arrow[l, "\boxfunctor (l^{\CPMQ})"]
    \end{tikzcd}
    \end{equation}
    \begin{equation}
    \begin{tikzcd}
        \boxfunctor (n) + 0 \arrow[r, "id_{\boxfunctor(n)} \times \epsilon"] \arrow[d, "r^{\Smooth}_{\boxfunctor(n)}", swap] & \boxfunctor (n) + \boxfunctor (0) \arrow[d, "\mu_{n, 0}"]
        \\ \boxfunctor (n) & \boxfunctor (n + 0) \arrow[l, "\boxfunctor (r^{\CPMQ}_n)"]
    \end{tikzcd}
    \end{equation}
    Where $l$ and $r$ denote left and right unitors, respectively.
    Take an element from the top left corner of the first square. Tracing the long route to the right, down, then left, we have:
    \begin{equation}
        (*,\ \Vec{u}) \mapsto (1,\ \Vec{u}) \mapsto 1 \otimes \Vec{u} \mapsto \Vec{u}
    \end{equation}
    The route straight down maps the element as:
    \begin{equation}
        (*,\ \Vec{u}) \mapsto \Vec{u}
    \end{equation}
    Thus the diagram commutes. The second diagram is similar. Along the right, down, left route:
    \begin{equation}
        (\Vec{u},\ *) \mapsto (\Vec{u},\ 1) \mapsto \Vec{u} \otimes 1 \mapsto \Vec{u}
    \end{equation}
    Along the route straight down:
    \begin{equation}
        (\Vec{u},\ *) \mapsto \Vec{u}
    \end{equation}
    \item It remains to show that $\mu$ is a natural transformation. For this we must prove that the following diagrams commute for all $n, m, k$ and completely positive superoperators $T, T'$ on the appropriate Hilbert spaces:
    \begin{equation}
    \begin{tikzcd}
        \boxfunctor (n) + \boxfunctor (m) \arrow[r, "\mu_{n, m}"] \arrow[d, "id_{\boxfunctor(n)} \times \boxfunctor (T)"] & \boxfunctor (n + m) \arrow[d, "\boxfunctor (id_n \otimes T)"]
        \\ \boxfunctor (n) + \boxfunctor (k) \arrow[r, "\mu_{n, k}"] & \boxfunctor (n + k)
    \end{tikzcd}
    \end{equation}
    and
    \begin{equation}
    \begin{tikzcd}
        \boxfunctor (n) + \boxfunctor (m) \arrow[r, "\mu_{n, m}"] \arrow[d, "\boxfunctor (T') \times id_{\boxfunctor(m)}"] & \boxfunctor (n + m) \arrow[d, "\boxfunctor (T' \otimes id_m)"]
        \\ \boxfunctor (k) + \boxfunctor (m) \arrow[r, "\mu_{n, k}"] & \boxfunctor (k + m)
    \end{tikzcd}
    \end{equation}
    We prove only that the first square commutes: the second is similar by the symmetry of the tensor and Cartesian products.
    
    First notice that since $\embedfunctor$ represents linear maps as matrices with respect to orthonormal bases, it preserves the tensor product: $\embedfunctor\left(T_1 \otimes T_2\right) = \embedfunctor\left(T_1\right) \otimes \embedfunctor\left(T_2\right)$\footnote{$\embedfunctor$ is in fact strict monoidal.}. Thus $\boxfunctor(id_n + T) = \smoothfunctor \fcmp \embedfunctor\left(id_n \otimes T\right) = \smoothfunctor(I_{4^n} \otimes M)$, where $I_{4^n}$ is the $4^n \times 4^n$ identity matrix and $M$ is the matrix for $T$ with respect to the tensored Pauli bases.

    Next note that since by definition of the tensor product, $(M_1 \otimes M_2)(\Vec{u} \otimes \Vec{v}) = M_1\Vec{u} \otimes M_2\Vec{v}$. Thus in order for the map $\smoothfunctor(M_1 \otimes M_2)$ in $\Smooth$ to correspond to $M_1 \otimes M_2$, it must act on $\Vec{u} \otimes \Vec{v}$, viewed as an element in Euclidean space, as $\smoothfunctor(M_1 \otimes M_2)(\Vec{u} \otimes \Vec{v}) = \smoothfunctor(M_1)(\Vec{u}) \otimes \smoothfunctor(M_2)(\Vec{v})$.
    
    With this in mind, when taking any element of from the top left corner of the first square and taking the upper route to the bottom right corner, we see that:
    \begin{equation}
        (\Vec{u},\ \Vec{v}) \mapsto \Vec{u} \otimes \Vec{v} \mapsto \Vec{u} \otimes \Vec{w}
    \end{equation}
    Along the bottom route, we have
    \begin{equation}
        (\Vec{u},\ \Vec{v}) \mapsto (\Vec{u},\ \Vec{w}) \mapsto \Vec{u} \otimes \Vec{w}
    \end{equation}
    Thus the square commutes.
    \item Hence have shown that there exist two coherence maps $\epsilon$ and $\mu$ satisfying the appropriate conditions to make $\boxfunctor$ lax monoidal.
    \end{itemize}
\end{proof}

For any two doubled ZX diagrams $D_1,\ D_2 \in \doublezx$, we have $\interpF[D_1 \otimes D_2] =  \interpF[D_1] \otimes \interpF[D_2]$. This leads to the following corollary:

\begin{corollary}
    The functor $\interpF \fcmp \boxfunctor$ is lax monoidal.
\end{corollary}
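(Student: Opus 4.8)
The plan is to deduce the corollary from the theorem via a standard fact: the composite of a lax monoidal functor after a strict monoidal functor is again lax monoidal, with coherence data inherited from the lax one. Here $\interpF \fcmp \boxfunctor$ means first applying the interpretation functor $\interpF\colon \doublezx \to \CPMQ$ and then $\boxfunctor\colon \CPMQ \to \Smooth$. So the first step is to record that $\interpF$ is \emph{strict} monoidal: $\doublezx$ and $\CPMQ$ are both PROPs, $\interpF$ is the identity on objects, it sends the empty diagram to $\mathrm{id}_0$, and the already-stated identity $\interpF[D_1 \otimes D_2] = \interpF[D_1] \otimes \interpF[D_2]$ says exactly that its multiplicativity constraint is the identity. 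Hence its lax monoidal coherence maps are identities.

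Next I would assemble the coherence data for the composite. Writing $(\epsilon, \mu)$ for the lax structure on $\boxfunctor$ constructed in the proof of the theorem, the general composition principle equips $\interpF \fcmp \boxfunctor$ with
\[
\epsilon' \;=\; \Bigl(\, 0 \xrightarrow{\ \epsilon\ } \boxfunctor(0) = \boxfunctor\bigl(\interpF(0)\bigr) \,\Bigr), \qquad
\mu'_{n,m} \;=\; \mu_{\interpF(n),\,\interpF(m)} \;=\; \mu_{n,m},
\]
the last equality holding because $\interpF$ is the identity on objects. (In full generality $\epsilon'$ and $\mu'$ are obtained by additionally composing with $\boxfunctor$ applied to the constraints of $\interpF$, but these are identities by strictness, so they drop out.)

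Finally I would verify the three families of coherence diagrams for $(\epsilon', \mu')$ — associativity, left and right unitality, and naturality of $\mu'$. Each reduces to the corresponding diagram for $(\epsilon, \mu)$ proved in the theorem: because $\interpF$ is identity on objects and strict, every vertex of a coherence diagram for the composite is already a vertex of a coherence diagram for $\boxfunctor$, which commutes; and naturality of $\mu'$ in the doubled-ZX morphisms follows from naturality of $\mu$ applied along the completely positive maps $\interpF[D]$. I do not anticipate a genuine obstacle; the only point needing care is checking that $\interpF$ really is strict monoidal (on objects and on the unit, not merely multiplicative on morphisms), after which the corollary is a formal consequence of the theorem.
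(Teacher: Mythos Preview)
Your proposal is correct and follows essentially the same approach as the paper: the paper's argument consists solely of recording that $\interpF[D_1 \otimes D_2] = \interpF[D_1] \otimes \interpF[D_2]$ and then asserting the corollary, while you spell out the standard mechanism (strict monoidality of $\interpF$ plus the composition principle for lax monoidal functors) that makes this inference work. Your treatment is more careful and more complete than what the paper records, but the underlying idea is identical.
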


This allows us to use doubled ZX diagrams in our hybrid quantum-classical diagrams.

\begin{remark}
    In the context of quantum information, the fact that our functor is lax monoidal but not oplax is not a bug, but a feature. Classical information can be encoded into a quantum system accurately, reflected by the preservation of wires entering the quantum system in our string diagrams. On the other hand, extracting information from the quantum system involves measurement, collapsing the superposition of the quantum state. This yields a probabilistic outcome and is reflected by the bunching up of wires leaving the quantum system into scalar probabilities.
\end{remark}
\section{Application}

In this section we apply our theoretical framework to the quantum machine learning experiments outlined in \cite{farhi2018classification} for the MNIST\cite{MNIST} classification task. In these experiments, a simplified version of the MNIST dataset is used: a subset of two of the 10 digits are chosen, the $28 \times 28$-pixel input images are downsampled to $4 \times 4$ and binarized, in the sense that each pixel is then set to either $0$ or $1$ depending on whether it exceeds a threshold value. This leaves us with images represented by $16$-digit binary strings $\left( x_0,\ x_1,\ \ldots,\ x_{15} \right)$, which can be encoded on a quantum computer as states $\bra{x_0 x_1 \ldots x_{15}}$.

\subsection{Quantum Data Encoding}

For clarity of exposition, we will draw string diagrams for $4$-bit binary strings, from which the case of $16$-bit strings follows analogously. Using our hybrid classical-quantum framework, we can explicitly express the encoding of these classical binary string into the quantum system. Take, for example, the string $0100$. This would become the quantum state $\bra{0100}$ obtained by applying an $X$ gate to the second qubit, which we can express using the following ZX-diagram:

\begin{equation}\label{zx0100}
    \tikzfig{basisembed_1}
\end{equation}

In our hybrid string diagrams, quantum systems are contained within functorial boxes\cite{mellies2006functorial}. We can thus turn (\ref{zx0100}) into a hybrid diagram representing the encoding of the classical data $0100$ into a quantum state by placing it inside a functor box and `pulling out' the spiders into the classical part of the diagram:

\begin{equation}
    \tikzfig{basisembed_2} \quad = \quad \tikzfig{basisembed_3} \quad
    = \quad \tikzfig{basisembed_4}
\end{equation}

\subsection{Quantum Measurement}

The experiment in \cite{farhi2018classification} distinguishes between two different MNIST digit classes by measuring a Pauli operator, such as $\sigma_Z$, on single classical readout qubit. We can obtain this particular expectation value from the probability of measuring the state $\bra{0}$ on the readout qubit via $\expectation{\sigma_Z} = \mathrm{Prob}(\bra{0}) - \mathrm{Prob}(\bra{1}) = 2\mathrm{Prob}(\bra{0}) - 1$. Measuring this probability on, for example, the last qubit of a 3-qubit system can be represented as the following doubled ZX diagram, where we make use of the discard operation provided by the CPM construction:
\begin{equation}
    \tikzfig{measurement_1}
\end{equation}
When contained within a functor box in our hybrid diagrams, this measurement outputs a single wire into our classical system representing a single real number: the measurement probability.
\begin{equation}
    \tikzfig{measurement_2}
\end{equation}

\subsection{Quantum Machine Learning with String Diagrams}

We now present a full hybrid-quantum classical circuit for the experiments in \cite{farhi2018classification} using our framework. The parameterized part of the quantum circuits used consist of $ZX$ and $XX$ interactions, which can be expressed succinctly in the ZX-calculus using `phase gadgets'\cite{cowtan2020phase}:

\begin{equation}
    \scalebox{0.8}{\tikzfig{qml_1}}
\end{equation}

Notice that a y-basis measurement is taken on the readout qubit, which was easy to express in ZX diagrams. We can integrate this circuit into our framework by placing it inside a functor box, then composing the output with classical computations for inferring the expectation value and computing a loss function:

\begin{equation}
    \scalebox{0.8}{\tikzfig{qml_2}}
\end{equation}

Where $y$ represents the label indicating the true class of the input image. We can now pull the relevant spiders out of the box to obtain a diagram that captures the semantics of the hybrid computation by drawing a distinction between classical and quantum parts of the algorithm:

\begin{equation}
    \scalebox{0.8}{\tikzfig{qml_3}}
\end{equation}

\section{Conclusion and Future Work}

We have taken initial steps towards a category-theoretic formalism of hybrid quantum-classical machine learning. This was done by embedding a the dagger compact closed category $\CPM$, in which quantum computation takes place, into a Cartesian category $\Smooth$ for the classical computation. This was achieved using a lax monoidal functor, which we defined concretely as the composition of two functors: one from  $\CPM \longrightarrow \MatR$, and another from $\MatR \longrightarrow \Smooth$. Furthermore, through the use of functor boxes we were able to capture both the classical and quantum part of hybrid algorithms in a single string diagram where functor boxes delineate the quantum-classical interface. A notable consequence of the lax monoidality of our functor was that multiple wires could enter the quantum system when encoding classical data on the quantum computer whereas only one wire, representing a measurement probability, could exit the quantum system. The quantum part of the algorithm could be described as a ZX-diagram, making amenable to quantum circuit optimization methods. Finally, we applied our framework to the MNIST classification task in \cite{farhi2018classification}, giving us a denotational semantics of a concrete quantum machine learning algorithm that could form the basis of a programming language.

In future work, we aim to formalise parameterization and backpropagation in these hybrid algorithms using the parametric lens construction of \cite{cruttwell2024deep}. This will allow for a full specification of a hybrid quantum classical machine learning algorithm with string diagrams capturing both the forward and backward pass. In such a framework, we would need to specify the gradient of quantum parts of the algorithm. This can be done using gradient recipes: quantum circuits that compute the gradients of others. We note the use of the ZXW calculus for computing gradient recipes\cite{koch2022quantum}\cite{wang2022differentiating} and are interested in employing them in our framework.

\section{Acknowledgements}

Alexander would like to thank Simon Harrison for the generous support he receives from the Wolfson Harrison UK Research Council Quantum Foundation Scholarship.

\bibliographystyle{eptcs}
\bibliography{main}

\begin{thebibliography}{10}
\providecommand{\bibitemdeclare}[2]{}
\providecommand{\surnamestart}{}
\providecommand{\surnameend}{}
\providecommand{\urlprefix}{Available at }
\providecommand{\url}[1]{\texttt{#1}}
\providecommand{\href}[2]{\texttt{#2}}
\providecommand{\urlalt}[2]{\href{#1}{#2}}
\providecommand{\doi}[1]{doi:\urlalt{https://doi.org/#1}{#1}}
\providecommand{\eprint}[1]{arXiv:\urlalt{https://arxiv.org/abs/#1}{#1}}
\providecommand{\bibinfo}[2]{#2}

\bibitemdeclare{misc}{pennylaneHybridComputation}
\bibitem{pennylaneHybridComputation}
\emph{\bibinfo{title}{{H}ybrid computation — {P}enny{L}ane ---
  pennylane.ai}}.
\newblock
  \bibinfo{howpublished}{\url{https://pennylane.ai/qml/glossary/hybrid_computation/}}.
\newblock \bibinfo{note}{[Accessed 30-03-2024]}.

\bibitemdeclare{misc}{websitezxcalculus}
\bibitem{websitezxcalculus}
\emph{\bibinfo{title}{{T}he {Z}{X}-calculus --- zxcalculus.com}}.
\newblock \bibinfo{howpublished}{\url{https://zxcalculus.com/}}.
\newblock \bibinfo{note}{[Accessed 12-06-2024]}.

\bibitemdeclare{inproceedings}{abramsky2004categorical}
\bibitem{abramsky2004categorical}
\bibinfo{author}{Samson \surnamestart Abramsky\surnameend} \&
  \bibinfo{author}{Bob \surnamestart Coecke\surnameend} (\bibinfo{year}{2004}):
  \emph{\bibinfo{title}{A categorical semantics of quantum protocols}}.
\newblock In: {\slshape \bibinfo{booktitle}{Proceedings of the 19th Annual IEEE
  Symposium on Logic in Computer Science, 2004.}},
  \bibinfo{organization}{IEEE}, pp. \bibinfo{pages}{415--425}.

\bibitemdeclare{misc}{abramsky2008categorical}
\bibitem{abramsky2008categorical}
\bibinfo{author}{Samson \surnamestart Abramsky\surnameend} \&
  \bibinfo{author}{Bob \surnamestart Coecke\surnameend} (\bibinfo{year}{2008}):
  \emph{\bibinfo{title}{Categorical quantum mechanics}}.
\newblock \eprint{0808.1023}.

\bibitemdeclare{article}{bertlmann2008bloch}
\bibitem{bertlmann2008bloch}
\bibinfo{author}{Reinhold~A \surnamestart Bertlmann\surnameend} \&
  \bibinfo{author}{Philipp \surnamestart Krammer\surnameend}
  (\bibinfo{year}{2008}): \emph{\bibinfo{title}{Bloch vectors for qudits}}.
\newblock {\slshape \bibinfo{journal}{Journal of Physics A: Mathematical and
  Theoretical}} \bibinfo{volume}{41}(\bibinfo{number}{23}), p.
  \bibinfo{pages}{235303}, \doi{10.1088/1751-8113/41/23/235303}.
\newblock \urlprefix\url{http://dx.doi.org/10.1088/1751-8113/41/23/235303}.

\bibitemdeclare{article}{Callison2022}
\bibitem{Callison2022}
\bibinfo{author}{Adam \surnamestart Callison\surnameend} \&
  \bibinfo{author}{Nicholas \surnamestart Chancellor\surnameend}
  (\bibinfo{year}{2022}): \emph{\bibinfo{title}{Hybrid quantum-classical
  algorithms in the noisy intermediate-scale quantum era and beyond}}.
\newblock {\slshape \bibinfo{journal}{Physical Review A}}
  \bibinfo{volume}{106}(\bibinfo{number}{1}),
  \doi{10.1103/physreva.106.010101}.
\newblock \urlprefix\url{http://dx.doi.org/10.1103/PhysRevA.106.010101}.

\bibitemdeclare{article}{Cerezo2021}
\bibitem{Cerezo2021}
\bibinfo{author}{Marco \surnamestart Cerezo\surnameend},
  \bibinfo{author}{Andrew \surnamestart Arrasmith\surnameend},
  \bibinfo{author}{Ryan \surnamestart Babbush\surnameend},
  \bibinfo{author}{Simon~C \surnamestart Benjamin\surnameend},
  \bibinfo{author}{Suguru \surnamestart Endo\surnameend},
  \bibinfo{author}{Keisuke \surnamestart Fujii\surnameend},
  \bibinfo{author}{Jarrod~R \surnamestart McClean\surnameend},
  \bibinfo{author}{Kosuke \surnamestart Mitarai\surnameend},
  \bibinfo{author}{Xiao \surnamestart Yuan\surnameend}, \bibinfo{author}{Lukasz
  \surnamestart Cincio\surnameend} et~al. (\bibinfo{year}{2021}):
  \emph{\bibinfo{title}{Variational quantum algorithms}}.
\newblock {\slshape \bibinfo{journal}{Nature Reviews Physics}}
  \bibinfo{volume}{3}(\bibinfo{number}{9}), pp. \bibinfo{pages}{625--644}.

\bibitemdeclare{article}{cockett1999linearly}
\bibitem{cockett1999linearly}
\bibinfo{author}{JRB \surnamestart Cockett\surnameend} \&
  \bibinfo{author}{Robert~AG \surnamestart Seely\surnameend}
  (\bibinfo{year}{1999}): \emph{\bibinfo{title}{Linearly distributive
  functors}}.
\newblock {\slshape \bibinfo{journal}{Journal of Pure and Applied Algebra}}
  \bibinfo{volume}{143}(\bibinfo{number}{1-3}), pp. \bibinfo{pages}{155--203}.

\bibitemdeclare{inproceedings}{coecke2008interacting}
\bibitem{coecke2008interacting}
\bibinfo{author}{Bob \surnamestart Coecke\surnameend} \& \bibinfo{author}{Ross
  \surnamestart Duncan\surnameend} (\bibinfo{year}{2008}):
  \emph{\bibinfo{title}{Interacting {Quantum} {Observables}}}.
\newblock In \bibinfo{editor}{Luca \surnamestart Aceto\surnameend},
  \bibinfo{editor}{Ivan \surnamestart Damgård\surnameend},
  \bibinfo{editor}{Leslie~Ann \surnamestart Goldberg\surnameend},
  \bibinfo{editor}{Magnús~M. \surnamestart Halldórsson\surnameend},
  \bibinfo{editor}{Anna \surnamestart Ingólfsdóttir\surnameend} \&
  \bibinfo{editor}{Igor \surnamestart Walukiewicz\surnameend}, editors:
  {\slshape \bibinfo{booktitle}{Automata, {Languages} and {Programming}}},
  \bibinfo{publisher}{Springer}, \bibinfo{address}{Berlin, Heidelberg}, pp.
  \bibinfo{pages}{298--310}, \doi{10.1007/978-3-540-70583-3_25}.

\bibitemdeclare{inproceedings}{coecke2018picturing}
\bibitem{coecke2018picturing}
\bibinfo{author}{Bob \surnamestart Coecke\surnameend} \& \bibinfo{author}{Aleks
  \surnamestart Kissinger\surnameend} (\bibinfo{year}{2018}):
  \emph{\bibinfo{title}{Picturing quantum processes: A first course on quantum
  theory and diagrammatic reasoning}}.
\newblock In: {\slshape \bibinfo{booktitle}{Diagrammatic Representation and
  Inference: 10th International Conference, Diagrams 2018, Edinburgh, UK, June
  18-22, 2018, Proceedings 10}}, \bibinfo{organization}{Springer}, pp.
  \bibinfo{pages}{28--31}.

\bibitemdeclare{article}{cowtan2020phase}
\bibitem{cowtan2020phase}
\bibinfo{author}{Alexander \surnamestart Cowtan\surnameend},
  \bibinfo{author}{Silas \surnamestart Dilkes\surnameend},
  \bibinfo{author}{Ross \surnamestart Duncan\surnameend}, \bibinfo{author}{Will
  \surnamestart Simmons\surnameend} \& \bibinfo{author}{Seyon \surnamestart
  Sivarajah\surnameend} (\bibinfo{year}{2020}): \emph{\bibinfo{title}{Phase
  Gadget Synthesis for Shallow Circuits}}.
\newblock {\slshape \bibinfo{journal}{Electronic Proceedings in Theoretical
  Computer Science}} \bibinfo{volume}{318}, p. \bibinfo{pages}{213–228},
  \doi{10.4204/eptcs.318.13}.
\newblock \urlprefix\url{http://dx.doi.org/10.4204/EPTCS.318.13}.

\bibitemdeclare{misc}{cruttwell2024deep}
\bibitem{cruttwell2024deep}
\bibinfo{author}{Geoffrey S.~H. \surnamestart Cruttwell\surnameend},
  \bibinfo{author}{Bruno \surnamestart Gavranovic\surnameend},
  \bibinfo{author}{Neil \surnamestart Ghani\surnameend}, \bibinfo{author}{Paul
  \surnamestart Wilson\surnameend} \& \bibinfo{author}{Fabio \surnamestart
  Zanasi\surnameend} (\bibinfo{year}{2024}): \emph{\bibinfo{title}{Deep
  Learning with Parametric Lenses}}.
\newblock \eprint{2404.00408}.

\bibitemdeclare{article}{eilenberg1945general}
\bibitem{eilenberg1945general}
\bibinfo{author}{Samuel \surnamestart Eilenberg\surnameend} \&
  \bibinfo{author}{Saunders \surnamestart MacLane\surnameend}
  (\bibinfo{year}{1945}): \emph{\bibinfo{title}{General theory of natural
  equivalences}}.
\newblock {\slshape \bibinfo{journal}{Transactions of the American Mathematical
  Society}} \bibinfo{volume}{58}, pp. \bibinfo{pages}{231--294}.

\bibitemdeclare{misc}{farhi2018classification}
\bibitem{farhi2018classification}
\bibinfo{author}{Edward \surnamestart Farhi\surnameend} \&
  \bibinfo{author}{Hartmut \surnamestart Neven\surnameend}
  (\bibinfo{year}{2018}): \emph{\bibinfo{title}{Classification with Quantum
  Neural Networks on Near Term Processors}}.
\newblock \eprint{1802.06002}.

\bibitemdeclare{inproceedings}{fong2019backprop}
\bibitem{fong2019backprop}
\bibinfo{author}{Brendan \surnamestart Fong\surnameend}, \bibinfo{author}{David
  \surnamestart Spivak\surnameend} \& \bibinfo{author}{R{\'e}my \surnamestart
  Tuy{\'e}ras\surnameend} (\bibinfo{year}{2019}):
  \emph{\bibinfo{title}{Backprop as functor: A compositional perspective on
  supervised learning}}.
\newblock In: {\slshape \bibinfo{booktitle}{2019 34th Annual ACM/IEEE Symposium
  on Logic in Computer Science (LICS)}}, \bibinfo{organization}{IEEE}, pp.
  \bibinfo{pages}{1--13}.

\bibitemdeclare{misc}{fong2018seven}
\bibitem{fong2018seven}
\bibinfo{author}{Brendan \surnamestart Fong\surnameend} \&
  \bibinfo{author}{David~I \surnamestart Spivak\surnameend}
  (\bibinfo{year}{2018}): \emph{\bibinfo{title}{Seven Sketches in
  Compositionality: An Invitation to Applied Category Theory}}.
\newblock \eprint{1803.05316}.

\bibitemdeclare{misc}{githubCTML}
\bibitem{githubCTML}
\bibinfo{author}{Bruno \surnamestart Gavranović\surnameend}:
  \emph{\bibinfo{title}{{C}ategory\_{T}heory\_{M}achine\_{L}earning: {L}ist of
  papers studying machine learning through the lens of category theory}}.
\newblock
  \bibinfo{howpublished}{\url{https://github.com/bgavran/Category_Theory_Machine_Learning}}.
\newblock \bibinfo{note}{[Accessed 11-06-2024]}.

\bibitemdeclare{misc}{gavranović2024position}
\bibitem{gavranović2024position}
\bibinfo{author}{Bruno \surnamestart Gavranović\surnameend},
  \bibinfo{author}{Paul \surnamestart Lessard\surnameend},
  \bibinfo{author}{Andrew \surnamestart Dudzik\surnameend},
  \bibinfo{author}{Tamara \surnamestart von Glehn\surnameend},
  \bibinfo{author}{João G.~M. \surnamestart Araújo\surnameend} \&
  \bibinfo{author}{Petar \surnamestart Veličković\surnameend}
  (\bibinfo{year}{2024}): \emph{\bibinfo{title}{Position: Categorical Deep
  Learning is an Algebraic Theory of All Architectures}}.
\newblock \eprint{2402.15332}.

\bibitemdeclare{misc}{Grover1996}
\bibitem{Grover1996}
\bibinfo{author}{Lov~K. \surnamestart Grover\surnameend}
  (\bibinfo{year}{1996}): \emph{\bibinfo{title}{A fast quantum mechanical
  algorithm for database search}}.
\newblock \eprint{quant-ph/9605043}.

\bibitemdeclare{article}{kimura2003bloch}
\bibitem{kimura2003bloch}
\bibinfo{author}{Gen \surnamestart Kimura\surnameend} (\bibinfo{year}{2003}):
  \emph{\bibinfo{title}{The Bloch vector for N-level systems}}.
\newblock {\slshape \bibinfo{journal}{Physics Letters A}}
  \bibinfo{volume}{314}(\bibinfo{number}{5–6}), p.
  \bibinfo{pages}{339–349}, \doi{10.1016/s0375-9601(03)00941-1}.
\newblock \urlprefix\url{http://dx.doi.org/10.1016/S0375-9601(03)00941-1}.

\bibitemdeclare{misc}{koch2022quantum}
\bibitem{koch2022quantum}
\bibinfo{author}{Mark \surnamestart Koch\surnameend} (\bibinfo{year}{2022}):
  \emph{\bibinfo{title}{Quantum Machine Learning using the ZXW-Calculus}}.
\newblock \eprint{2210.11523}.

\bibitemdeclare{article}{MNIST}
\bibitem{MNIST}
\bibinfo{author}{Yann \surnamestart LeCun\surnameend}, \bibinfo{author}{Corinna
  \surnamestart Cortes\surnameend} \& \bibinfo{author}{CJ~\surnamestart
  Burges\surnameend} (\bibinfo{year}{2010}): \emph{\bibinfo{title}{MNIST
  handwritten digit database}}.
\newblock {\slshape \bibinfo{journal}{ATT Labs [Online]. Available:
  http://yann.lecun.com/exdb/mnist}} \bibinfo{volume}{2}.

\bibitemdeclare{misc}{leinster2016basic}
\bibitem{leinster2016basic}
\bibinfo{author}{Tom \surnamestart Leinster\surnameend} (\bibinfo{year}{2016}):
  \emph{\bibinfo{title}{Basic Category Theory}}.
\newblock \eprint{1612.09375}.

\bibitemdeclare{inproceedings}{mellies2006functorial}
\bibitem{mellies2006functorial}
\bibinfo{author}{Paul-Andr{\'e} \surnamestart Melli{\`e}s\surnameend}
  (\bibinfo{year}{2006}): \emph{\bibinfo{title}{Functorial boxes in string
  diagrams}}.
\newblock In: {\slshape \bibinfo{booktitle}{International Workshop on Computer
  Science Logic}}, \bibinfo{organization}{Springer}, pp.
  \bibinfo{pages}{1--30}.

\bibitemdeclare{article}{Mitarai2018}
\bibitem{Mitarai2018}
\bibinfo{author}{K.~\surnamestart Mitarai\surnameend},
  \bibinfo{author}{M.~\surnamestart Negoro\surnameend},
  \bibinfo{author}{M.~\surnamestart Kitagawa\surnameend} \&
  \bibinfo{author}{K.~\surnamestart Fujii\surnameend} (\bibinfo{year}{2018}):
  \emph{\bibinfo{title}{Quantum circuit learning}}.
\newblock {\slshape \bibinfo{journal}{Physical Review A}}
  \bibinfo{volume}{98}(\bibinfo{number}{3}), \doi{10.1103/physreva.98.032309}.
\newblock \urlprefix\url{http://dx.doi.org/10.1103/PhysRevA.98.032309}.

\bibitemdeclare{misc}{nlabmonoidalfunctor}
\bibitem{nlabmonoidalfunctor}
\bibinfo{author}{\surnamestart {nLab authors}\surnameend}
  (\bibinfo{year}{2024}): \emph{\bibinfo{title}{monoidal functor}}.
\newblock
  \bibinfo{howpublished}{\url{https://ncatlab.org/nlab/show/monoidal+functor}}.
\newblock
  \bibinfo{note}{\href{https://ncatlab.org/nlab/revision/monoidal+functor/54}{Revision
  54}}.

\bibitemdeclare{article}{Peruzzo2014}
\bibitem{Peruzzo2014}
\bibinfo{author}{Alberto \surnamestart Peruzzo\surnameend},
  \bibinfo{author}{Jarrod \surnamestart McClean\surnameend},
  \bibinfo{author}{Peter \surnamestart Shadbolt\surnameend},
  \bibinfo{author}{Man-Hong \surnamestart Yung\surnameend},
  \bibinfo{author}{Xiao-Qi \surnamestart Zhou\surnameend},
  \bibinfo{author}{Peter~J. \surnamestart Love\surnameend},
  \bibinfo{author}{Alán \surnamestart Aspuru-Guzik\surnameend} \&
  \bibinfo{author}{Jeremy~L. \surnamestart O’Brien\surnameend}
  (\bibinfo{year}{2014}): \emph{\bibinfo{title}{A variational eigenvalue solver
  on a photonic quantum processor}}.
\newblock {\slshape \bibinfo{journal}{Nature Communications}}
  \bibinfo{volume}{5}(\bibinfo{number}{1}), \doi{10.1038/ncomms5213}.
\newblock \urlprefix\url{http://dx.doi.org/10.1038/ncomms5213}.

\bibitemdeclare{misc}{piedeleu2023introduction}
\bibitem{piedeleu2023introduction}
\bibinfo{author}{Robin \surnamestart Piedeleu\surnameend} \&
  \bibinfo{author}{Fabio \surnamestart Zanasi\surnameend}
  (\bibinfo{year}{2023}): \emph{\bibinfo{title}{An Introduction to String
  Diagrams for Computer Scientists}}.
\newblock \eprint{2305.08768}.

\bibitemdeclare{article}{Preskill2018}
\bibitem{Preskill2018}
\bibinfo{author}{John \surnamestart Preskill\surnameend}
  (\bibinfo{year}{2018}): \emph{\bibinfo{title}{Quantum Computing in the NISQ
  era and beyond}}.
\newblock {\slshape \bibinfo{journal}{Quantum}} \bibinfo{volume}{2},
  p.~\bibinfo{pages}{79}, \doi{10.22331/q-2018-08-06-79}.
\newblock \urlprefix\url{http://dx.doi.org/10.22331/q-2018-08-06-79}.

\bibitemdeclare{article}{Schuld2019}
\bibitem{Schuld2019}
\bibinfo{author}{Maria \surnamestart Schuld\surnameend}, \bibinfo{author}{Ville
  \surnamestart Bergholm\surnameend}, \bibinfo{author}{Christian \surnamestart
  Gogolin\surnameend}, \bibinfo{author}{Josh \surnamestart Izaac\surnameend} \&
  \bibinfo{author}{Nathan \surnamestart Killoran\surnameend}
  (\bibinfo{year}{2019}): \emph{\bibinfo{title}{Evaluating analytic gradients
  on quantum hardware}}.
\newblock {\slshape \bibinfo{journal}{Physical Review A}}
  \bibinfo{volume}{99}(\bibinfo{number}{3}), \doi{10.1103/physreva.99.032331}.
\newblock \urlprefix\url{http://dx.doi.org/10.1103/PhysRevA.99.032331}.

\bibitemdeclare{inbook}{selinger2010survey}
\bibitem{selinger2010survey}
\bibinfo{author}{P.~\surnamestart Selinger\surnameend} (\bibinfo{year}{2010}):
  \emph{\bibinfo{title}{A Survey of Graphical Languages for Monoidal
  Categories}}, p. \bibinfo{pages}{289–355}.
\newblock \bibinfo{publisher}{Springer Berlin Heidelberg},
  \doi{10.1007/978-3-642-12821-9_4}.
\newblock \urlprefix\url{http://dx.doi.org/10.1007/978-3-642-12821-9_4}.

\bibitemdeclare{article}{selinger2007dagger}
\bibitem{selinger2007dagger}
\bibinfo{author}{Peter \surnamestart Selinger\surnameend}
  (\bibinfo{year}{2007}): \emph{\bibinfo{title}{Dagger compact closed
  categories and completely positive maps}}.
\newblock {\slshape \bibinfo{journal}{Electronic Notes in Theoretical computer
  science}} \bibinfo{volume}{170}, pp. \bibinfo{pages}{139--163}.

\bibitemdeclare{article}{Shor1997}
\bibitem{Shor1997}
\bibinfo{author}{Peter~W. \surnamestart Shor\surnameend}
  (\bibinfo{year}{1997}): \emph{\bibinfo{title}{Polynomial-Time Algorithms for
  Prime Factorization and Discrete Logarithms on a Quantum Computer}}.
\newblock {\slshape \bibinfo{journal}{SIAM Journal on Computing}}
  \bibinfo{volume}{26}(\bibinfo{number}{5}), p. \bibinfo{pages}{1484–1509},
  \doi{10.1137/s0097539795293172}.
\newblock \urlprefix\url{http://dx.doi.org/10.1137/S0097539795293172}.

\bibitemdeclare{inproceedings}{Shor1994}
\bibitem{Shor1994}
\bibinfo{author}{P.W. \surnamestart Shor\surnameend} (\bibinfo{year}{1994}):
  \emph{\bibinfo{title}{Algorithms for quantum computation: discrete logarithms
  and factoring}}.
\newblock In: {\slshape \bibinfo{booktitle}{Proceedings 35th Annual Symposium
  on Foundations of Computer Science}}, pp. \bibinfo{pages}{124--134},
  \doi{10.1109/SFCS.1994.365700}.

\bibitemdeclare{misc}{wang2022differentiating}
\bibitem{wang2022differentiating}
\bibinfo{author}{Quanlong \surnamestart Wang\surnameend},
  \bibinfo{author}{Richie \surnamestart Yeung\surnameend} \&
  \bibinfo{author}{Mark \surnamestart Koch\surnameend} (\bibinfo{year}{2022}):
  \emph{\bibinfo{title}{Differentiating and Integrating ZX Diagrams with
  Applications to Quantum Machine Learning}}.
\newblock \eprint{2201.13250}.

\bibitemdeclare{misc}{vandewetering2020zxcalculus}
\bibitem{vandewetering2020zxcalculus}
\bibinfo{author}{John \surnamestart van~de Wetering\surnameend}
  (\bibinfo{year}{2020}): \emph{\bibinfo{title}{ZX-calculus for the working
  quantum computer scientist}}.
\newblock \eprint{2012.13966}.

\end{thebibliography}
\pagebreak
\appendix

\section{Appendix}

\subsection{Categories and Functors}\label{appendixCT}

\begin{definition}[Category]
A category $\mathcal{C}$ consists of the following:
\begin{enumerate}[label=(\roman*)]
    \item A collection\footnote{We use collections to avoid Russell's paradox, since objects may themselves be sets.} of \textit{objects} $\mathrm{Ob}(\mathcal{C})$.
    \item For each pair of objects $A, B$ a set $\mathcal{C}(A, B)$ whose elements are called \textit{morphisms} from $A$ to $B$.
    \item For each object $A$, a morphism $id_A \in \mathcal{C}(C, C)$ called the \textit{identity morphism} for $A$.
    \item For any three objects $A, B, C$ and any two morphisms $f \in \mathcal{C}(A, B)$ and $g \in \mathcal{C}(B, C)$, a morphism $f \fcmp g \in \mathcal{C}(A, C)$ called the \textit{composite} of $f$ and $g$.
\end{enumerate}
For notational convenience, we will often objects $C \in \mathrm{Ob}(\mathcal{C})$ as $C \in \mathcal{C}$, and morphisms $f \in \mathcal{C}(A, B)$ as $f: A \rightarrow B$.
These constituents are subject to the following conditions:
\begin{enumerate}[label=(\roman*)]
    \item For any morphism $f: A \rightarrow B$, composition with the identity gives the same morphism:
    \begin{equation*}
        id_A \fcmp f = f = f \fcmp id_B
    \end{equation*}
    \item For any three morphisms $f: A \rightarrow B$, $g: B \rightarrow C$, $h: C \rightarrow D$ their composition is associative:
    \begin{equation*}
        (f \fcmp g) \fcmp h = f 
        \fcmp (g \fcmp h)    
    \end{equation*}
    Given this equality, we will write such compositions simply as $f \fcmp g \fcmp h$.
\end{enumerate}
\end{definition}

\begin{definition}[Functor]
    A \textit{functor} $\mathrm{F}$ between two categories $\mathcal{C}$ and $\mathcal{D}$ consists of:
    \begin{enumerate}[label=(\roman*)]
        \item For each object $C \in \mathcal{C}$, an object $\mathrm{F}(C) \in \mathcal{D}$
        \item For each morphism $f: A \rightarrow B$ in $\mathcal{C}$, a morphism $\mathrm{F}(f): \mathrm{F}(A) \rightarrow \mathrm{F}(B)$ in $\mathcal{D}$.
    \end{enumerate}
    Subject to the following conditions:
    \begin{enumerate}[label=(\roman*)]
        \item Identities are preserved: $\mathrm{F}(id_C) = id_{\mathrm{F}(C)}$.
        \item Composition is preserved: $\mathrm{F}(f \fcmp g) = \mathrm{F}(f) \fcmp \mathrm{F}(g)$.
    \end{enumerate}
\end{definition}

\begin{definition}[Natural transformation]
Let $\mathrm{F}$ and $\mathrm{G}$ be functors $\mathcal{C} \rightarrow \mathcal{D}$. A \textit{natural transformation} $\alpha$ is a family of morphisms $\{ \alpha_C \}_{C \in \mathcal{C}}$ in $\mathcal{D}$, such that:
\begin{enumerate}[label=(\roman*)]
    \item The family contains exactly one morphism $\alpha_C$ for each object $C \in \mathcal{C}$, which we call \textit{the component of $\alpha$ at $C$}.
    \item For every morphism $f: A \rightarrow B$ in $\mathcal{C}$, the following diagram in $\mathcal{D}$ commutes:
\begin{equation*}
    \begin{tikzcd}
        F(A) \arrow[r, "\alpha_A"] \arrow[d, "F(f)", swap]
        & G(A) \arrow[d, "G(f)"]
        \\ F(B) \arrow[r, "\alpha_B"]
        & G(B)
    \end{tikzcd}
\end{equation*}
\end{enumerate}
\end{definition}

\begin{definition}[Isomorphism]
    A morphism $f: A \rightarrow B$ in a category $\mathcal{C}$ is an \textit{isomorphism} if there exists a morphism $f^{-1}: B \rightarrow A$ such that $f \fcmp f^{-1} = id_A$ and $f^{-1} \fcmp f = id_B$. The morphism $f^{-1}$ is referred to as the \textit{inverse} of $f$.
\end{definition}

\begin{definition}[Natural isomorphism]
A natural transformation $\alpha$ is called a \textit{natural isomorphism} if each component $\alpha_c$ is an isomorphism in $\mathcal{D}$.
\end{definition}

Product categories extend the notion of the Cartesian product of sets to categories.

\subsection{Symmetric Monoidal Categories and PROPs}

\begin{definition}[Product category]
    Let $\mathcal{C}$ and $\mathcal{D}$ be categories. The \textit{product category} $\mathcal{C} \times \mathcal{D}$ has:
    \begin{enumerate}[label=(\roman*)]
        \item As objects pairs $(C, D)$ of objects $C \in \mathcal{C}$ and $D \in \mathcal{D}$.
        \item as morphisms $(C_1, D_1) \rightarrow (C_2, D_2)$ pairs $(f, g)$, where $f: C_1 \rightarrow C_2$ is a morphism in $\mathcal{C}$ and $g: D_1 \rightarrow D_2$ is a morphism in $\mathcal{D}$.
    \end{enumerate}
Composition is defined element-wise: $(f_1, g_1) \fcmp (f_2, g_2) = (f_1 \fcmp f_2,\ g_1 \fcmp g_2)$, and identities are given by pairs of identities from the original categories: $id_{(C, D)} = (id_C, id_D)$.
\end{definition}

Product categories allow us to define bifunctors.

\begin{definition}[Bifunctor]
    A \textit{bifunctor} is a functor whose domain is a product category.
\end{definition}

\begin{definition}[Monoidal Category]
    A \textit{monoidal structure} on a category $\mathcal{C}$ consists of the following:
    \begin{enumerate}[label=(\roman*)]
        \item A bifunctor $- \otimes -: \mathcal{C} \times \mathcal{C} \rightarrow \mathcal{C}$ called the \textit{monoidal product}.
        \item An object $I \in \mathcal{C}$, called the \textit{monoidal unit}.
        \item Three natural isomorphisms:
        \begin{enumerate}
            \item $\alpha: ((- \otimes -) \otimes -)\longrightarrow (- \otimes (- \otimes -))$ called the \textit{associator}, with components $\alpha_{A, B, C}: (A \otimes B) \otimes C \rightarrow A \otimes (B \otimes C)$.
            \item $\lambda: (I \otimes -) \rightarrow (-)$ called the \textit{left unitor}, with components $\lambda_C: I \otimes C \rightarrow C$.
            \item $\rho: (C \otimes I) \rightarrow (-)$ called the \textit{right unitor}, with components $\rho_C: C \otimes I \rightarrow C$.
        \end{enumerate}
    \end{enumerate}
    These constituents are subject to the following \textit{coherence conditions} for all $A, B, C, D \in \mathcal{C}$:
    \begin{enumerate}[label=(\roman*)]
        \item The \textit{pentagon identity}
        \begin{equation*}
            \begin{tikzcd}[column sep=-2em]
                & & (A \otimes B) \otimes (C \otimes D) \arrow[drr, "\alpha_{A, B, C \otimes D}"] & & \\ ((A \otimes B) \otimes C) \otimes D
                 \arrow[urr, "\alpha_{A \otimes B, C, D}"] \arrow[dr, "\alpha_{A, B, C} \otimes id_D", swap] & & & & A \otimes (B \otimes (C \otimes D)) \arrow[dl, "id_A \otimes \alpha_{B, C, D}"] \\[2em] & (A \otimes (B \otimes C)) \otimes D \arrow[rr, "\alpha_{A, B \otimes C, D}", swap]
                & & A \otimes ((B \otimes C) \otimes D) &
            \end{tikzcd}
        \end{equation*}
        \item The \textit{triangle identity}
                \begin{equation*}
            \begin{tikzcd}
                (A \otimes I) \otimes B \arrow[rr, "\alpha_{A, I, B}"] \arrow[dr, "\lambda_A \otimes id_B", swap] & & A \otimes (I \otimes B) \arrow[dl, "id_A \otimes \rho_{B}"] \\
                & A \otimes B &
            \end{tikzcd}
        \end{equation*}
    \end{enumerate}
    We call a category with a particular monoidal structure a \textit{monoidal category}. When we want to indicate the particular monoidal structure of a monoidal category, we will write $(\mathcal{C}, \ \otimes_{\mathcal{C}},\ I_{\mathcal{C}})$.
\end{definition}

\begin{definition}[Symmetric monoidal category]
    A \textit{symmetric monoidal category} is a monoidal category with an additional natural isomorphism $\sigma: (- \otimes -) \rightarrow (- \otimes -)$, called the \textit{swap map}, with components $\sigma_{A, B}: A \otimes B \rightarrow B \otimes B$ such that:
    \begin{enumerate}[label=(\roman*)]
        \item For all $A, B \in \mathcal{C}$, $\sigma_{A, B} \fcmp \sigma_{B, A} = id_{A \otimes B}$. In other words, swapping is self-inverse.
        \item For all $A, B, C \in \mathcal{C}$, the following diagram commutes:
        \begin{equation*}
            \begin{tikzcd}
                (A \otimes B) \otimes C \arrow[r, "\alpha_{A, B, C}"] \arrow[d, "\sigma_{A, B} \otimes id_C"]
                & A \otimes (B \otimes C) \arrow[r, "\sigma_{A, B \otimes C}"]
                & (B \otimes C) \otimes A \arrow[d, "\alpha_{B, C, A}"] \\
                (B \otimes A) \otimes C \arrow[r, "\alpha_{B, A, C}"]
                & B \otimes (A \otimes C) \arrow[r, "id_B \otimes \sigma_{A, C}"]
                & B \otimes (C \otimes A)
            \end{tikzcd}
        \end{equation*}
    \end{enumerate}
\end{definition}

\begin{definition}[Strict monoidal category]
    When the associator and unitors of a monoidal category are all identity morphisms, the coherence conditions hold automatically. We call a monoidal category for which this is the case a \textit{strict monoidal category}.
\end{definition}

\begin{definition}[PROP]
A \textit{PROP} is a strict symmetric monoidal category with objects the natural numbers and monoidal product on objects given by addition.
\end{definition}

\subsection{String Diagrams, Functorial Boxes and Lax Monoidal Functors}

\subsubsection{String Diagrams}

String diagrams are an intuitive, yet formal graphical language with which to express taking arbitray compositions and tensor products of morphisms in strict monoidal categories. In these diagrams, morphisms are drawn as boxes whose inputs and outputs are connected by wires labelled with objects of the category. The way that the boxes are wired together determines a term created by composing and taking the monoidal product of morphisms in the category. Identity morphisms are drawn as a wire, swap maps are drawn as crossing wires, and the monoidal unit is drawn as an empty diagram. For example, for $f : A \rightarrow B$ and $g : B \rightarrow C$ in a symmetric monoidal category, the term:
\begin{equation*}
    (f \otimes id_D)\ \fcmp\ \sigma_{B, D}\ \fcmp\ (id_D \otimes g)
\end{equation*}

becomes:

\begin{equation*}
    \tikzfig{appendix/stringdiagram}
\end{equation*}

String diagrams represent the same morphism up to topological deformation. This gives rise to a key feature of string diagrams: that they automatically capture the associativity and identity laws for both composition and monoidal product. For composition, associativity becomes:

\begin{equation*}
    f \fcmp (g \fcmp h) \quad = \quad \tikzfig{appendix/assoc} \quad = \quad (f \fcmp g) \fcmp h
\end{equation*}

While the identity law $f \fcmp id_B = f = id_A \fcmp f$ becomes:

\begin{equation*}
    \tikzfig{appendix/idl} \quad = \quad \tikzfig{appendix/idm} \quad = \quad \tikzfig{appendix/idr}
\end{equation*}

And similar for the monoidal product, where the diagram composition now occurs in parallel. For further introductions to string diagrams, see \cite{selinger2010survey, piedeleu2023introduction}.

\subsubsection{Functorial Boxes}

First introduced in \cite{cockett1999linearly} and further developed in \cite{mellies2006functorial}, functorial boxes allow the incorporation of functor application into string diagrams. For a string diagram drawn in some category $\mathcal{D}$, we can depict a functor $\mathsf{F}: \mathcal{C} \longrightarrow \mathcal{D}$ applied to a morphism in $\mathcal{C}$ by containing a string diagram for the morphism in $\mathcal{C}$ within a box:

\begin{equation*}
    \tikzfig{appendix/functorboxl} \quad = \quad \tikzfig{appendix/functorboxr}
\end{equation*}

Note that a wire of type $A \in \mathcal{C}$ becomes a wire of type $\mathsf{F}(A) \in \mathcal{D}$ upon exiting the box. These boxes elegantly capture the functor law $\mathrm{F}(id_A) = id_{\mathrm{F}(A)}$ by allowing us to freely add or remove boxes only containing a wire:

\begin{equation*}
    \tikzfig{appendix/functoridl} \quad = \quad \tikzfig{appendix/functoridr}
\end{equation*}

As well as capturing the law $\mathrm{F}(f \fcmp g) = \mathrm{F}(f) \fcmp \mathrm{F}(g)$ by allowing us to fuse adjacent boxes:

\begin{equation*}
    \tikzfig{appendix/functorcompl} \quad = \quad \tikzfig{appendix/functorcompr}
\end{equation*}

\subsubsection{Lax Monoidal Functors}

There are subtleties when applying functors to monoidal categories in the way that a functor interacts with the monoidal product. This becomes important for our functorial boxes. For example, $\mathsf{F}(A \otimes B)$ may not necessarily equal $\mathsf{F}(A) \otimes \mathsf{F}(B)$. This gives rise to the notion of lax, oplax and strong monoidal functors. In terms of string diagrams and functorial boxes, this governs the ability to allow whether multiple wires can enter a box, and whether multiple wires exiting a box are preserved as multiple wires or combined into a single wire. In our work, we are interested in lax monoidal functors\cite{nlabmonoidalfunctor}:

\begin{definition}[Lax Monoidal Functor]
    Let $(\mathcal{C}, \ \otimes_{\mathcal{C}},\ I_{\mathcal{C}})$ and $(\mathcal{D}, \ \otimes_{\mathcal{D}},\ I_{\mathcal{D}})$ be monoidal categories. A \textit{lax monoidal functor} between them consists of:
    \begin{enumerate}[label=(\roman*)]
        \item A functor $F: \mathcal{C} \rightarrow \mathcal{D}$.
        \item \textit{Coherence maps}:
        \begin{enumerate}
            \item A morphism $\epsilon: I_{\mathcal{D}} \rightarrow F(I_{\mathcal{C}})$ in $\mathcal{D}$.
            \item A natural transformation $\mu: F(-) \otimes_{\mathcal{D}} F(-) \longrightarrow F(- \otimes_{\mathcal{C}} -)$, with components $\mu_{A, B}: F(A) \otimes_{\mathcal{D}} F(B) \rightarrow F(A \otimes_{\mathcal{C}} B)$.
        \end{enumerate}
    \end{enumerate}
    These coherence maps are subject to the following conditions:
    \begin{enumerate}[label=(\roman*)]
        \item \textit{Associativity}: for all $A, B, C \in \mathcal{C}$, the following diagram commutes:
        \begin{equation*}
            \begin{tikzcd}[column sep=4em, row sep=3.5em]
                (F(A) \otimes_{\mathcal{D}} F(B)) \otimes_{\mathcal{D}} F(C) \arrow[r, "\alpha^{\mathcal{D}}_{F(A), F(B), F(C)}"] \arrow[d, "\mu_{A, B} \otimes_{\mathcal{D}} id_{F(C)}", swap]
                & F(A) \otimes_{\mathcal{D}} (F(B) \otimes_{\mathcal{D}} F(C)) \arrow[d, "id_{F(A)} \otimes_{\mathcal{D}} \mu_{B, C}"]
                \\ F(A \otimes_{\mathcal{C}} B) \otimes_{\mathcal{D}} F(C) \arrow[d, "\mu_{A \otimes_{\mathcal{C}} B, C}"]
                & F(A) \otimes_{\mathcal{D}} F(B \otimes_{\mathcal{C}} C) \arrow[d, "\mu_{A, B \otimes_{\mathcal{C}} C}"] \\
                F((A \otimes_{\mathcal{C}} B) \otimes_{\mathcal{C}} C) \arrow[r, "F
                (\alpha^{\mathcal{C}}_{A, B, C})"] &
                F(A \otimes_{\mathcal{C}} (B \otimes_{\mathcal{C}} C))
            \end{tikzcd}
        \end{equation*}
        Where $\alpha^{\mathcal{C}}$ and $\alpha^{\mathcal{D}}$ denote the associators in $\mathcal{C}$ and $\mathcal{D}$, respectively.
        \item \textit{Unitality}: for all $C \in \mathcal{C}$, the following diagrams commute:
        \begin{equation*}
            \begin{tikzcd}[column sep=4em, row sep=3em]
                I_{\mathcal{D}} \otimes_{\mathcal{D}} F(C) \arrow[r, "\epsilon \otimes_{\mathcal{D}} F(id_C)"]
                \arrow[d, "\lambda^{\mathcal{D}}_{F(C)}", swap]
                & F(I_{\mathcal{C}}) \otimes_{\mathcal{D}} F(C) \arrow[d, "\mu_{I_{\mathcal{C}}, C}"] \\
                F(C) & F(I_{\mathcal{C}} \otimes_{\mathcal{C}} C) \arrow[l, "F(\lambda^{\mathcal{C}}_C)"]
            \end{tikzcd}
        \end{equation*}
        \begin{equation*}
            \begin{tikzcd}[column sep=4em, row sep=3em]
                F(C) \otimes_{\mathcal{D}} I_{\mathcal{D}} \arrow[r, "F(id_C) \otimes_{\mathcal{D}} \epsilon"]
                \arrow[d, "\rho^{\mathcal{D}}_{F(C)}", swap]
                & F(C) \otimes_{\mathcal{D}} F(I_{\mathcal{C}}) \arrow[d, "\mu_{C, I_{\mathcal{C}}}"] \\
                F(C) & F(C \otimes_{\mathcal{C}} I_{\mathcal{C}}) \arrow[l, "F(\rho^{\mathcal{C}}_C)"]
            \end{tikzcd}
        \end{equation*}
        Where $\lambda^{\mathcal{C}},\ \rho^{\mathcal{C}}$ and $\lambda^{\mathcal{D}},\ \rho^{\mathcal{D}}$ are the unitors in $\mathcal{C}$ and $\mathcal{D}$, respectively.
    \end{enumerate}
\end{definition}

The significance of being lax monoidal in the context of functorial boxes is that it allows for multiple wires to entering a box, which are preserved as multiple wires. This is illustrated by the following:

\begin{equation*}
    \tikzfig{appendix/lax}
\end{equation*}

Without the lax monoidal condition, only one wire could enter each  box. As there would not be a sensible notion of multiple wires entering, any diagram containing such would be considered ill-formed. 

An oplax monoidal functor is defined similarly to a lax monoidal functor, except the coherence maps go in the other direction:
\begin{equation*}
    \eta: F(I_{\mathcal{C}}) \rightarrow I_{\mathcal{D}} \quad \quad \quad \quad \nu: F(- \otimes_{\mathcal{C}} -) \longrightarrow F(-)
\end{equation*}

An oplax monoidal functor allows multiple wires exiting a box to be preserved as multiple wires. If this is not the case wires exiting are bundled together into a single wire as follows:

\begin{equation*}
    \tikzfig{appendix/notoplax}
\end{equation*}

If the coherence maps are all isomorphisms for either a lax or an oplax monoidal functor, it is called a strong monoidal functor, and multiple wires can both enter and exit the functorial boxes.

\end{document}